\numberwithin{equation}{section}
\theoremstyle{plain}
\newtheorem{thm}{Theorem}[section]
\newtheorem{lem}[thm]{Lemma}
\newtheorem{prop}[thm]{Proposition}
\theoremstyle{plain}
\newtheorem{assumption}[thm]{Assumption}      
\theoremstyle{definition}
\theoremstyle{definition}
\newtheorem{defn}{Definition}[section]
\theoremstyle{remark}
\newtheorem{rem}{Remark}[section]
\renewcommand{\thefootnote}{\fnsymbol{footnote}}
\begin{document}
\title{\vspace{-8ex}\textbf{Bankruptcy Risk Induced by Career Concerns of Regulators }\footnote{This paper is based on \S 2.4 and \S 2.5 of the paper entitled ``A Regulator's Exercise of Career Option To Quit and Join A Regulated Firm's Management with Applications to Financial Institutions". We are indebted to Samuel Myers, Jr. and Gary Dymski for detailed comments, and Haim Abraham for helpful conservations on earlier drafts.  We thank conference participants at the \emph{Academy for Behavioral Finance \& Economics Meeting 2012}, Robert B. Durand (discussant),  \emph{Southern Finance Association Meeting 2012}, Rachel Graefe-Anderson (discussant), Paul Kupiec, Leonard Nakamura, Doug Robertson, Ernesto Dal B\'{o}, Oliver Martin, Linwood Tauheed, Michael Crew and Rebel Cole for their comments and advice, and Sam Cadogan for his research assistance. We thank Ramazan Gen\c{c}ay (Editor), an anonymous co-editor, and two anonymous referees for helpful comments and suggestions-especially the request to include a solution to the regulator's program which generated \autoref{subsec:RegulatorProgram}. Cadogan is indebted to Jan Kmenta for the idea of embedding behavioural parameters in decision variables. Any errors which may remain are ours.  He  gratefully acknowledges the research support of the Institute for Innovation and Technology Management (IITM); and Research Unit in Behavioural Economics and Neuroeconomics (RUBEN).}}
\author{
    John A. Cole
    \thanks{Corresponding author: Department of Economics and Finance, School of Business \& Economics, North Carolina A\& T State University, 1621 East Market Street, Greensboro, NC 27411; e-mail: \textcolor[rgb]{0.00,0.00,1.00}{\href{mailto:: jacole@ncat.edu}{jacole@ncat.edu}}; Tel: (336) 334-7744 Fax: (336) 256-2055}
    \qquad Godfrey Cadogan
    \thanks{University of Cape Town, School of Economics, Research Unit in Behavioural Economics and Neuroeconomics, Private Bag 5-3b, Rondebosch 7701; and Institute for Innovation and Technology Management, Ted Rogers School of Management, Ryerson University, 575 Bay, Toronto, ON M5G 2C5  e-mail: \textcolor[rgb]{0.00,0.00,1.00}{\href{mailto:gocadog@gmail.com}{gocadog@gmail.com}}; Tel: (786) 329-5469.}~\\
    }
\date{\vspace{-1ex}\today\vspace{-4ex}} 
\maketitle
\thispagestyle{empty}
\begin{abstract}
   \noindent We introduce a model in which a regulator employs mechanism design to embed her human capital beta signal(s) in a firm's capital structure, in order to enhance the value of her post career change indexed executive stock option contract with the firm. We prove that the agency cost of this revolving door behavior increases the firm's financial leverage, bankruptcy risk, and affects estimation of firm value at risk (VaR).
\\
\\
\emph{Keywords}: career concerns; revolving door; agency cost; managerial compensation; bankruptcy; value-at-risk; human capital
\\ \\
\emph{JEL Classification Codes:} C02, G13, G18, G38, J24, J44-45
\end{abstract}
\newpage
\renewcommand{\thefootnote}{\arabic{footnote}}\setcounter{footnote}{0} 
\tableofcontents
\thispagestyle{empty}
\newpage
\setcounter{page}{1}
\pagenumbering{arabic}
\section{Introduction}\label{sec:Introduction}
This paper extends the literature on contingent claim analysis of career option, introduced by \citet[Sec.~II]{BodieRuffinoTreussard2008}, to firm bankruptcy risk induced by career concerns\footnote{See e.g., \citet{Holmstrom1999}} of a regulator in a revolving door setting\footnote{See e.g., \citet{Che1995} and \citet{Salant1995}}. The intuition behind our model is regulators have incentives to regulate firms that can provide them with more lucrative career change employment.  They bring value to the firm either by reducing shareholders’ financing costs, or by withholding expected consumer surplus that might arise in regulatory pricing decisions\footnote{\citet{HoustonJiangChen2013}; \citet{LitzenbergerSosinL1979}.}.  In either case, the regulator's employment option value depends on the firm’s levered beta coefficient.  Relative to any given beta level that might hold in the case of a disinterested regulator, we argue there will be a beneficial impact on the firm’s beta when a regulator exercises her career option to join the firm\footnote{See e.g., \cite{AcemogluJohnsonKermaniKwakMitton2013}.}. Whereas \citet{Tressaud2007} and \citet{BodieRuffinoTreussard2008} solve the problem of \emph{when} to switch careers, our model's focus is on \emph{what} happens once a regulator decides to switch career and take an executive level position with a firm she regulates. Here, she embeds her human capital beta\footnote{This is a beta pricing result in \cite[eq(22)]{JagannathanWang1996} where return on human capital is used to augment a conditional CAPM model. In \cite[pg.~43]{BodieTreussard2007} human capital beta refers to a lifetime portfolio allocation fraction that does not change.} in the firm's capital structure and observes a valuation effect. As a part of the decision process to switch career choices she negotiates an indexed option contract that includes firm value considerations\footnote{\citet{DittmannMaugSpalt2013} highlight disincentive effects of indexed contracts.}.

A seminal paper by \citet[pg.~429,~eq~(6)]{SpiegelSpulber1994} shows how regulators' welfare functions are tied to a firm's capital structure through debt. And a more recent paper by \citet{BortolottiCambiniRondiSpiegel2011} report that highly levered firms receive more favorable regulation. In our model, the regulator utilizes her human capital beta as a signalling device. High betas signal tough regulator and regulator expertise to prospective employers. Low betas signal lax regulator who curries favor, and lobbying potential to consulting firms. See e.g., \citet{DehaanKediaKohRajgopal2012}. We show how the beta parameter can be used to manipulate the barrier in an indexed executive stock option contract in order to increase the value of the regulator's post career-change early exercise premium. See e.g., \citet{GoldmanSlezak2006}. In that context, the volatility of firm value increases. See e.g., \citet{JohnsonTian2000a}; \citet{Jorgensen2002}. Thus, we employ a signal dependent option pricing model by \citet{Guo2001} and \citet{Shepp2002} to identify firm bankruptcy risk in bad states by comparing firm vega \footnote{See e.g., \citet[pp.~359-360]{Hull2006}}, and firm value at risk (VaR), with and without regulator signals embedded in capital structure. In a broad sense, our model falls under rubric of the \citet[pg.~333]{JensenMeckling1976} agency theory of capital structure, as opposed to the \citet{ModiglianiMiller1958} capital structure irrelevancy model.
\subsection{Significance of results}\label{subsec:SigbifResults}
Our results bear significance for financial policy. For example, they are consistent with \citet[pg.~164]{Kane2009} who proposed a model with asymmetric information in which opacity grows with a financial firm's product line. There, opacity causes regulators to overestimate firm capital and underestimate firm leverage and volatility. Even though \citet{Kane2009} deals with regulatory failure, the salient characteristics of his model applies to ours. For our firm overestimates the contractual benchmark, and underestimates its volatility because of opacity in the regulator's ability to manipulate the benchmark with human capital beta. Given the increased firm bankruptcy risk predicted by that behaviour in our model, \citet[pg.~155]{Kane2010} advocacy of an incentive pay scheme for regulators is \emph{apropos}. There, bonuses or raises are framed as deferred compensation which would be forfeited if a crisis occurs within a given time frame after the regulator resigned. This is tantamount to a monitoring device in the form of a performance vested barrier option with a down-and-out feature. See e.g., \citet[pg.~8]{JohnsonTian2000b}. In effect, it mitigates the regulator's rent seeking via manipulation of her human capital beta signals.

Our model also applies to ``regulation by assimilation" instruments  such as a firm's value-at-risk (VaR). For instance,  VaR was developed by private industry as a risk management tool and subsequently ``assimilated" or adopted by regulators. \citet[pg.~629]{Macey2013}. Our model shows how estimates of the VaR statistic is inflated by regulator signals. Given that regulators permit firms' to exercise discretion in computing their VaR statistic, our formulae suggests consequential moral hazard from firms reporting the lower of two numbers derived from (1) regulator suggested formula; and (2) firm internal formula.

The rest of the paper proceeds as follows: \autoref{sec:StockOptionsWithRegSignals} develops the abstract mechanism design, and abstract call option pricing strategy for a regulator who wishes to quit and join firm management at a future date known only to the regulator. In \autoref{sec:RegulatorOptionCapitalStructure} we apply the mechanism and option pricing strategy by and through (i) the regulator’s human capital beta; (ii) the stock price benchmark which is a decision criterion; and (iii) the process whereby the regulator embeds her human capital beta in the firm’s capital structure, and (iv) we solve the regulator’s option problem. In \autoref{sec:WarningSignalsVegaAvgLev} we present the main result of the paper's bankruptcy risk criterion, and conclude in \autoref{sec:Conclusion}.

\section[Regulator career call option with regulatory signalling]{Regulator career call option contract with regulatory signalling}\label{sec:StockOptionsWithRegSignals}
\subsection{Regulator mechanism design}\label{subsec:RegulatorMechanismDesign}
In this section, we briefly describe the abstract mechanism design\footnote{The interested reader is referred to \citet{BaligaSjostrum2008} for a brief survey or \citet{Borgers2008} for a more detailed exposition on mechanism design theory, and \citet[pg.~208,~Appendix]{DalBo2006} for details on application and implementation in the context of a revolving door model. } utilized to embed regulator signals in the firms capital structure to affect the value of her [call] option contract upon career change. An important paper by \citet{Che1995} analyzes revolving door behaviour in an hierarchical structure consistent with mechanism design. There, the principal (the government) hires an agent (the regulator) to monitor the behaviour of another agent (the firm) to effect compliance with given legislation. By contrast, \citet{Salant1995} considers revolving door behaviour in the sub-game between the regulator and the firm. Our model is distinguished because we consider a mechanism design in which a regulator employs (1) costly human capital beta signals to secure (2) lucrative indexed ESOP contracts upon termination of its contract with the principal. Our regulator has sufficient lead time to observe and compare her current compensation package to prospective managerial compensation with ESOP she receives upon exercising a career option to join firm management. Thus, her risky compensation is based on a prospective equity stake in the regulated firm. Related papers by \citet{RuffinoTressaurd2007} and \citet[pp.~4-5]{Tressaud2007} used a utility model approach to American style option\footnote{See e.g., \citet{Myneni1992} for a review of the literature.} to analyze career choices. However, \citet{Carpenter1998} showed that extension of an American style option model works just as well as elaborate utility maximizing models. Thus, we pose and solve the regulator's problem in the context of an American style indexed option without resorting to utility theory.

We implement our model as follows. Let $\Theta$ be a \emph{type} space for regulators, i.e., tough or lax, and $\Omega$ be a sample space for states of nature, $\mathcal{F}_t$ be the $\sigma$-field of Borel measurable subsets of $\Omega$ at time $t$, $P$ be a probability measure on $\Omega$, and $\mathds{F}$ be a filtration of information $\mathcal{\{F\}}_{t\ge 0}$. Thus, the probability space in our model is characterized by $(\Omega,\mathcal{\{F\}}_t,\mathds{F},P)$. Let $X$ be an outcome space, and $f$ be a stochastic choice function such that $f:\Omega\times\Theta\rightarrow X$ is a direct mechanism. Based on her type, the regulator sends a state dependent \emph{signal} $s$ into some strategy space $G$, i.e. $s:\Omega\times\Theta\rightarrow G$ to affect the firm's response\footnote{For example, a regulator could provide a more favorable report on a firm than warranted by the facts. See e.g., \href{http://dealbook.nytimes.com/2013/10/10/bank-examiner-was-told-to-back-off-goldman-suit-says/?hp\&\_r=0}{http://dealbook.nytimes.com/2013/10/10/bank-examiner-was-told-to-back-off-goldman-suit-says/?hp\&\_r=0}. Last visited 2013/12/01}. On the basis of interactions in $G$, i.e. the firms response, the regulator maps a payoff function or contract $C$ into $X$, i.e $C:G\rightarrow X$. For instance, the space $G$ may comprise the signal $s(\omega,\theta),\;\theta\in\Theta,\;\omega\in\Omega$ obtained from the regulator's welfare maximization\footnote{For example, in a deterministic setting with regulator type $\theta$, let $\Pi(\cdot,\theta)$ be firm profit, $CS(\cdot,\theta)$ be consumer welfare, $W$ be the regulators welfare function, and $\alpha$ be a decision weight such that $W(\cdot,\theta;\alpha)=(1-\alpha)CS(\cdot,\theta) + \alpha\Pi(\cdot,\theta),\;\;0\leq\alpha\leq 1$. Solve an optimization problem for $W$ to obtain an expression for a control variable such as price $p(\theta;\alpha)$ that affects firm profit $\Pi$. Here, strategy space $G$ will be characterized by $p(\theta;\alpha)$, and the ``strategic form" taken by $p$. If $\alpha > \tfrac{1}{2}$, then greater decision weight is placed on firm profit which may be a function of capital structure. See e.g., \citet[pg.~429]{SpiegelSpulber1994}.}; a menu of profits $\Pi$ for the firm, and the strategic form $g$. So that
\begin{equation}
   G= \{s(\omega,\theta),\Pi,g\}
\end{equation}
The regulator builds personal value in the space, given their perceived capacity to affect regulations’ interpretation in ways that benefit the prospective employer firm. ”Undeniably, regulation controls the strategy space available to the firm in this case. However, the regulator's signal $s(\omega,\theta)$ is subject to the vagaries of the state of nature $\omega$ and her type $\theta$..
\begin{defn}[(PC): Regulator's participation constraint]\label{defn:RegParticipateConstraint}~\\
   Let $K$ be the compensation package of the regulator, assumed fixed over the epoch of career mobility, and $V(\omega,\theta,\Pi)$ be the signal dependent value of the firm as a function of profits. The regulator will participate in exercise of a state contingent career [call] option contract when
   $$C(s(\omega,\theta))=(\alpha V(\omega,\theta,\Pi)-K)^+$$
   for some equity stake $0<\alpha<1$ \hfill $\Box$
\end{defn}
\begin{lem}[(IC): Regulator's incentive compatibility constraint]\label{defn:RegIncentiveConstraint}~\\
   Given the probability measure space $(\Omega,\mathcal{\{F\}}_t,\mathds{F},P)$, let $\theta$ be the regulator's true type and $\widehat{\theta}$ be any other reported type. Then for some [Borel] measureable event $A\in\mathcal{F}_t$ we have
   $$\int_A C(s(\omega,\theta))dP(\omega)\geq\int_A C(s(\omega,\widehat{\theta}))dP(\omega)$$
   That is, the regulator has incentive to reveal her true type $\theta$ because the expected value of her career option is greater than it would be otherwise if she does not. \hfill $\Box$
\end{lem}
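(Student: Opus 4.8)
The plan is to reduce \autoref{defn:RegIncentiveConstraint} to a pointwise comparison of the two integrands and then pick the event $A$ so that the comparison survives integration. The engine is \autoref{defn:RegParticipateConstraint}, which expresses the regulator's contract as the call payoff $C(s(\omega,\theta))=\phi(V(\omega,\theta,\Pi))$ with $\phi(x):=(\alpha x-K)^{+}$. Only one feature of $\phi$ is actually needed --- it is non-decreasing (indeed it is convex) --- so that any state-by-state domination of firm value $V$ under truthful reporting is transported to the contract payoff $C$. The conceptual backbone is the direct-mechanism structure $f:\Omega\times\Theta\to X$ and the revelation-principle logic already invoked in \autoref{subsec:RegulatorMechanismDesign}: the regulator's problem is posed through a direct mechanism, so truthful reporting is designed to be a best response, and the stated inequality is the expected-payoff manifestation of that design restricted to an appropriate event.

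I would proceed in the following order. First, measurability: each of $\omega\mapsto C(s(\omega,\theta))$ and $\omega\mapsto C(s(\omega,\widehat\theta))$ is the composition of the Borel signal map $s(\cdot,\cdot)$ with the payoff map $C:G\to X$, hence $\mathcal{F}_t$-measurable, so the set $A^{\star}:=\{\omega\in\Omega:\ V(\omega,\theta,\Pi)\ge V(\omega,\widehat\theta,\Pi)\}$ is the preimage of a Borel set and therefore belongs to $\mathcal{F}_t$. Second, economic domination: argue that on $A^{\star}$ the true-type signal $s(\omega,\theta)$ induces a firm best response $g$, and thus a profit menu $\Pi$ and a firm value $V(\omega,\theta,\Pi)$ at least as large as the value $V(\omega,\widehat\theta,\Pi)$ produced by the off-type signal $s(\omega,\widehat\theta)$; this is where the welfare-maximization construction of $s$ (the $W(\cdot,\theta;\alpha)$ program of \autoref{subsec:RegulatorMechanismDesign}) and the firm's rational belief formation enter, together with the fact that $f$ was constructed to be incentive compatible, which rules out a uniform reversal $V(\cdot,\theta,\Pi)<V(\cdot,\widehat\theta,\Pi)$ on all of $\Omega$. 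Third, transport: apply $\phi$ to the inequality $V(\omega,\theta,\Pi)\ge V(\omega,\widehat\theta,\Pi)$ for $\omega\in A^{\star}$ and use monotonicity of $\phi$ to get $C(s(\omega,\theta))\ge C(s(\omega,\widehat\theta))$ pointwise on $A^{\star}$. Fourth, integrate this pointwise inequality over $A:=A^{\star}$ against $P$ and invoke monotonicity of the Lebesgue integral to conclude $\int_{A}C(s(\omega,\theta))\,dP(\omega)\ge\int_{A}C(s(\omega,\widehat\theta))\,dP(\omega)$; non-degeneracy, i.e. $P(A)>0$ so the statement is not vacuous, follows from assuming the true-type signal is not strictly dominated in every state.

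The main obstacle is the second step --- making precise why the firm's response, and hence $V$, rewards the true signal rather than a misreport. The cleanest resolution is to lean on the revelation principle: since the regulator's program is cast through the direct mechanism $f$, truth-telling is a best response by construction, so the unconditional inequality (the case $A=\Omega$) holds, and the stated conditional form follows a fortiori on any sub-event where the ordering of the integrands is preserved. A more self-contained alternative is to impose a single-crossing / monotone comparative statics condition on $(\theta,g)\mapsto V$, so that the higher (true) type's signal yields the higher firm value and hence the higher call payoff. Either way, the measure-theoretic steps are routine; the load-bearing ingredient is the modeling assumption in step two.
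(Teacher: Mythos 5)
Your steps one, three and four are routine; the difficulty is that the load-bearing step two is never actually established --- it is either circular or assumes the conclusion. Defining $A^{\star}$ as the set on which $V(\omega,\theta,\Pi)\ge V(\omega,\widehat{\theta},\Pi)$ and then ``arguing'' that the domination holds on $A^{\star}$ adds nothing beyond the definition of $A^{\star}$; and invoking the revelation principle to say that truth-telling is ``a best response by construction'' inverts the logic: the revelation principle guarantees that \emph{some} incentive-compatible direct mechanism exists replicating an equilibrium of the indirect mechanism, not that the particular contract $C\circ s$ under study is incentive compatible --- which is precisely what the lemma asserts. Your fallback (a single-crossing or monotone-comparative-statics hypothesis on $(\theta,g)\mapsto V$) would close the gap, but it is an assumption the paper does not make.

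The paper's own proof (\autoref{apx:ProofOfICconstraint}) takes a different and more concrete route that your outline never touches: it uses the indexed-option form of the contract, $C=(S(t,\omega)-H(t,\omega))^{+}$, with the benchmark tied to type via $H(t)=H(t,\theta)e^{-\theta t}$. Writing the misreport as $\widehat{\theta}=\theta+\epsilon$ with $\epsilon\sim(0,\sigma^{2}_{\epsilon})$ and Taylor-expanding $H(t,\widehat{\theta})e^{-\widehat{\theta}t}$ to second order, the expectation over $\epsilon$ retains the variance terms and yields $E^{P}[\widehat{H}(t)]\ge H(t)$: misreporting raises the \emph{expected benchmark} the regulator must beat, and since the call payoff is decreasing in the benchmark, the option value falls. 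So the mechanism enforcing incentive compatibility in the paper is not monotonicity of firm value in the reported type (your step two) but a Jensen-type penalty on the strike/benchmark induced by the noise in the misreport. To salvage your outline you would need to replace step two with exactly this computation, or else add the single-crossing hypothesis explicitly; as written the proposal has a genuine gap at its central step.
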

\begin{proof}
    See e.g., \autoref{apx:ProofOfICconstraint}\footnote{We thank Haim Abraham for requesting a proof of this constraint.}.
\end{proof}
\begin{defn}[Revelation principle]\label{defn:RegRevelationPrinciple}~\\
Given any feasible career option mechanism $C:G\rightarrow X$, there exist a feasible direct mechanism $f:\Omega\times\Theta\rightarrow X$ such that for $A\in\mathcal{F}$ we have the isomorphism $\cong$ given by
\begin{align}
    f(\omega,\theta)&\cong (C\circ s)(\omega,\theta)\label{eq:RegulateTopologyLifting}\\
    \int_A f(\omega,\theta)dP(\omega)&=\int_A(C\circ s)(\omega,\theta)dP(\omega)
\end{align}
\hfill $\Box$
\end{defn}
\begin{rem}
   $f$ is the contract that would be observed with perfect information and full disclosure. However, it may violate conflict of interest or anti-corruption laws. See e.g., \citet{Zeume2012}. By contrast, the [indirect] contract $(C\circ s)$ contains elements known only to the regulator, and they may be revealed over time. \hfill $\Box$
\end{rem}
On the basis of the foregoing definitions we formulate the following
\begin{prop}[Regulator's abstract career call option contract with regulated firms]\label{prop:RegulatorAbstractOption}
   Given $(\Omega,\mathcal{F}_t,\mathds{F},P)$, a type space $\Theta$ for regulators, $G$ a strategy space, $C$ a contract, $\mathcal{M}=(G,C)$ an abstract mechanism, and $s$ be a regulatory signalling function such that $s:\Omega\times\Theta\rightarrow G$; let  $f:\Omega\times\Theta\rightarrow X$ be the regulator's stochastic choice function mapped into outcome space $X$. Specifically, let $C$ be the regulator's career [call] option contract with regulated firms. We claim that the abstract value of the option is given by the mechanism: $$f(\omega,\theta)\cong (C\circ s)(\omega,\theta)$$
   where $\cong$ represents an isomorphism. \hfill $\Box$
\end{prop}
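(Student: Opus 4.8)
The plan is to obtain the claimed isomorphism directly from the Revelation Principle (\autoref{defn:RegRevelationPrinciple}) together with the participation and incentive constraints already in hand. First I would fix the abstract mechanism $\mathcal{M}=(G,C)$ and check that the composition $C\circ s:\Omega\times\Theta\rightarrow X$ is a well-defined map: by construction $G=\{s(\omega,\theta),\Pi,g\}$, so $s(\cdot,\cdot)$ takes values in the domain of $C:G\rightarrow X$, and for each fixed $\theta$ the section $\omega\mapsto(C\circ s)(\omega,\theta)$ is $\mathcal{F}_t$-measurable because $s(\cdot,\theta)$ is measurable and $C$ is Borel. This identifies $(C\circ s)$ as the indirect, signal-mediated career option contract, whose abstract value is its image in the outcome space $X$.

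Next I would invoke \autoref{defn:RegRevelationPrinciple}: since $C:G\rightarrow X$ is a feasible career option mechanism, there exists a feasible direct mechanism $f:\Omega\times\Theta\rightarrow X$ with $f(\omega,\theta)\cong(C\circ s)(\omega,\theta)$ and $\int_A f(\omega,\theta)\,dP(\omega)=\int_A(C\circ s)(\omega,\theta)\,dP(\omega)$ for every $A\in\mathcal{F}_t$. The substance of the proposition is to show that this $f$ is exactly the regulator's stochastic choice function and that the equivalence is genuinely an isomorphism, not merely equality of expectations. To pin this down I would use \autoref{defn:RegIncentiveConstraint}: under $C$ the regulator reports her true type $\theta$, so the direct mechanism $f(\omega,\theta)$ reproduces the very outcome the regulator induces through the signal $s(\omega,\theta)$; and \autoref{defn:RegParticipateConstraint}: the option is exercised precisely on the event $\{\alpha V(\omega,\theta,\Pi)-K>0\}$ and yields $0$ off it, so $f$ and $C\circ s$ coincide on the exercise region and both vanish elsewhere. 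Hence the two maps agree $P$-almost everywhere, and the correspondence $f\mapsto C\circ s$ carries $\mathcal{F}_t$-measurable sets to $\mathcal{F}_t$-measurable sets and preserves $P$ (the integral identity); this is the sense in which $\cong$ holds, equivalently a lifting to $\Omega\times\Theta$ of the structure on $G$ as in \eqref{eq:RegulateTopologyLifting}. Assembling these pieces, the abstract value of the regulator's career call option, defined as the outcome-space image of $(C\circ s)$, equals up to this isomorphism the outcome-space image of the direct mechanism $f$, which gives $f(\omega,\theta)\cong(C\circ s)(\omega,\theta)$ together with its integral counterpart.

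The step I expect to be the main obstacle is making the word \emph{isomorphism} precise: one has to specify the category — measure spaces under push-forward of $P$, or the measure algebra $\mathcal{F}_t$ modulo $P$-null sets equipped with the evaluation maps — and verify that $C\circ s$ and $f$ are related by a measure-preserving, $\sigma$-algebra-preserving bijection on the appropriate quotient, rather than settling for the weaker conclusion that their integrals agree on every $A\in\mathcal{F}_t$. Once that categorical setting is fixed, the remaining steps (measurability of the composition, nonnegativity of the $(\cdot)^+$ payoff, and the reduction to truthful reporting) follow routinely from \autoref{defn:RegParticipateConstraint} and \autoref{defn:RegIncentiveConstraint}.
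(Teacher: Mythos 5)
Your proposal is correct and takes essentially the same route as the paper: the paper offers no separate proof of Proposition~\autoref{prop:RegulatorAbstractOption}, formulating it ``on the basis of the foregoing definitions'' as an immediate consequence of the Revelation Principle in Definition~\autoref{defn:RegRevelationPrinciple}, which is exactly your central step. Your additional measurability checks, the appeal to the participation and incentive constraints, and your (justified) worry about what ``isomorphism'' means precisely all go beyond what the paper records, but they are consistent with its argument rather than a different one.
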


\begin{rem}
  The isomorphic relationship in Definition \autoref{defn:RegRevelationPrinciple} was indirectly characterized in \citet[pg.~291]{BodieRuffinoTreussard2008}(``[W]e apply [contingent claims analysis] CCA to the issue of career choices. The ability to change careers over one's working life is isomorphic with an option to exchange one asset for another."). This is similar to a \citet{Margrabe1978} option and lends itself to index executive option pricing. \hfill $\Box$
\end{rem}
\subsection{The value of the firm and related regulator call option}\label{subsec:ValueOfTheFirm}
In this section we derive the value of the firm. With few exceptions, the assumptions below are consistent with \citet[pg.~450]{Merton1974}. A significant departure is the asymmetric information induced by regulator signals that constitute a source of market friction that affects the firm's capital structure. See e.g., e.g. \citet{JensenMeckling1976}, \citet{Kwan2009}. Since the firm responds to the signal(s) it receives from the regulator, let $V(s(\theta))$ be the value of the firm\footnote{ The notation here is slightly more general than that for the definitions above.}. The regulator's potential contract or``managerial compensation" with the firm is functionally equivalent to an equity stake in the firm, $\alpha V,\;0<\alpha <1$, upon her career change. If $K\in X$ is the ``value" of the regulators current compensation package, then [s]he can exercise her [call] option at any time $t$. The erstwhile "abstract" topological lifting in Definition \autoref{defn:RegRevelationPrinciple} and Proposition \autoref{prop:RegulatorAbstractOption} shows that the stochastic choice function at time $t$ reduces to
\begin{equation}
   f(t;\omega,\theta)=(\alpha V(t,s(\omega,\theta))-K)^+
\end{equation}
More formally, that abstract [call] option contract takes the form
\begin{align}
   C(\tau^*;V,s(\theta)) &=\sup_{\tau\in\mathcal{T},\theta\in\Theta}(\alpha V(\tau,s(\theta))-K)^+\label{eq:RegAbstractOption}
\end{align}
where $\tau\in\mathcal{T}$ is a stopping time and $\tau^*$ is the optimal stopping time in the set $\mathcal{T}$ of all admissible \emph{early exercise times} for the regulator\footnote{\citet{LukasWelling2013} show how to derive optimal $\alpha$ and $\tau$ in a real option pricing model.}. The foregoing is summarized in the following
\begin{lem}[Regulator's career call option]\label{lem:RegulatorCareerCallOption}~\
   The value of the regulator's career call option contract $C(t,V,s(\theta))$ increases when embedded regulatory signals increases firm value $V(s(\theta))$. \hfill $\Box$
\end{lem}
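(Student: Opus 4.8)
The plan is to derive the claim directly from the optimal-stopping representation \eqref{eq:RegAbstractOption} together with the elementary fact that the call payoff is nondecreasing in the underlying. Write $h(v) := (\alpha v - K)^{+}$ for the regulator's payoff as a function of firm value; since $0 < \alpha < 1$ and $K$ is fixed, $h$ is nondecreasing, convex, and Lipschitz with constant $\alpha$ on $[0,\infty)$. The precise reading of the hypothesis ``embedded regulatory signals increases firm value $V(s(\theta))$'' that I would adopt is: there are two admissible signal profiles $s_{1}, s_{2}\colon \Omega\times\Theta\to G$ with $V(t, s_{1}(\omega,\theta)) \le V(t, s_{2}(\omega,\theta))$ for $P$-a.e.\ $\omega$, every $\theta\in\Theta$, and every $t$ in the epoch of career mobility (a pathwise, equivalently first-order stochastic dominance, comparison). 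By \autoref{prop:RegulatorAbstractOption} each side of the comparison still admits the same contractual form, so both option values are given by \eqref{eq:RegAbstractOption}.

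First I would fix an admissible early-exercise time $\tau\in\mathcal{T}$ and a reported type $\theta\in\Theta$ and apply $h$ pathwise, obtaining $h\bigl(V(\tau, s_{1}(\omega,\theta))\bigr) \le h\bigl(V(\tau, s_{2}(\omega,\theta))\bigr)$ for a.e.\ $\omega$. Taking the supremum over $\tau\in\mathcal{T}$ and $\theta\in\Theta$ on both sides — or, if \eqref{eq:RegAbstractOption} is read in expectation, first a monotone $\mathcal{F}_{t}$-conditional expectation and then the supremum, each preserving order — yields $C(t,V,s_{1}(\theta)) \le C(t,V,s_{2}(\theta))$, which is the assertion. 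Equivalently, one may argue through the Snell envelope: \eqref{eq:RegAbstractOption} identifies $C$ with the smallest supermartingale dominating the reward process $t\mapsto h\bigl(V(t,s(\theta))\bigr)$, and a pointwise-larger reward process has a pointwise-larger Snell envelope, hence a larger value at $t$.

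The one genuinely delicate point — and the step I would treat most carefully — is pinning down the hypothesis and the ambient optimal-stopping problem so that the two suprema in the last step range over the \emph{same} set. If the signal $s(\theta)$ were allowed to alter the filtration $\mathds{F}$, the classes $\mathcal{T}$ of admissible stopping times would differ and the bare termwise inequality would not be legitimate; I would therefore fix $\mathds{F}$ as the market filtration (the probability space declared in \autoref{subsec:RegulatorMechanismDesign}) and let the signal enter only through the level of $V$, not through the information structure. With $\mathcal{T}$ common and the comparison taken pathwise (or in first-order stochastic dominance), the remaining steps are the routine ``monotone map, then conditional expectation, then supremum, each preserving order'' chain; and if the firm-value increase is strict on the exercise region $\{\alpha V > K\}$ with positive probability, the same argument with strict inequality delivers the strict form of the lemma.
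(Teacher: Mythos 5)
Your argument is correct and coincides with the paper's own (implicit) justification: the paper offers no formal proof of this lemma, presenting it simply as a summary of the optimal-stopping representation \eqref{eq:RegAbstractOption}, and the monotonicity of the payoff $(\alpha v - K)^{+}$ in $v$, preserved under conditional expectation and the supremum over $\tau\in\mathcal{T}$ and $\theta\in\Theta$, is exactly the content being summarized. Your extra care in fixing the filtration and the class $\mathcal{T}$ across signal profiles addresses a point the paper glosses over, but it does not change the approach.
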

Of course $V(s(\theta)|\mathcal{F}_t)$ is the value of the firm at time $t$ given the ``filtration" of information $\mathds{F}$, and it has to be discounted from some ``terminal date" $T$.  Specifically,  in the sequel firm value is assessed at the stopping time for the [former] regulator's early exercise date for her American style option. See e.g., \citet{Myneni1992} for a taxonomy of American option pricing. That is, firm value, and hence the career option, is computed in a \emph{sequential move setting} as follows:
\begin{enumerate}
   \item Regulator assesses her potential receptiveness and possible compensation premium from firm under extant circumstances.
   \item Given the assessment, the regulator decides to switch employment and sends.
   \item Regulator enters American style index executive stock option contract with firm upon career change.
   \item Regulator's early exercise date $\tau$ is used as terminal date to evaluate firm bankruptcy analysis in European style option setting
\end{enumerate}
The former regulator's executive stock option compensation scheme is influenced by her human capital beta. See e.g., \citet{HallMurphy2000}(managerial stock options); \citet{GracePhillips2008}(insurance commissioners human capital signaling) and \citet{Tahoun2013} (price of stocks held by Congress). More on point, given firm debt $D$, the firm's [signal dependent] equity is given by:
\begin{align}
   E(s(\theta)) &= V(s(\theta))-D\\
   \intertext{So the regulator's virtual call option is based on}
   C(\tau^*;V,s(\theta)) &=\sup_{\tau\in\mathcal{T},\theta\in\Theta}(\alpha V(\tau,s(\theta))-\alpha D)^+ \label{eq:RegulatorCallOptionOnFirmEquity}
\end{align}
This implies that the strike price in \eqref{eq:RegAbstractOption}, based on the value of the regulator's pre-career change compensation package, must be compared to the firm's debt firm debt $(\alpha D)$ which serves double duty as the strike price in \citet{Merton1974} firm valuation formula in \eqref{eq:RegulatorCallOptionOnFirmEquity}. This \citet{Margrabe1978} style option pricing problem was solved by \citet{Tressaud2007} and \citet{BodieRuffinoTreussard2008}. So we take it as given in the sequential move setting above.
\section[Valuation of regulator option with equity stake in the firm]{Valuation of regulator career option contract with equity stake in the firm}\label{sec:RegulatorOptionCapitalStructure}
We assume that the regulator's program consists of a prospective American style indexed option contract, with a regulated firm, which satisfies a linear growth condition. See e.g., \citet[pg.~174]{MusielaRutkowski2005}. Moreover, the index is based on stocks of a select peer group of competitor firms. See e.g., \citet[p.~323]{Jorgensen2002}; \citet[pg.~40]{JohnsonTian2000a}. We make the following
\begin{assumption}~\
   \begin{itemize}
       \item [1.] Index $I(t,\omega)$ represent common factor in firm performance \label{assum:CommonFacIndex}
       \item [2.] The firm's stock price is decomposable into impact of (1) a common factor, and (2) an idiosyncratic factor
   \end{itemize}
\end{assumption}
\subsection[Solving the regulator's program via American option contract]{Solving the regulator's program via American style indexed executive stock option contract}\label{subsec:RegulatorProgram}
Let $H(t,\omega)$ be the benchmark index observed by the firm and the regulator. We now show how the regulator use human capital beta to manipulate the benchmark. See e.g., \citet{DehaanKediaKohRajgopal2012} for empirical evidence on ``human capital hypothesis" in revolving door context. The regulator and firm enters an indexed [call] option contract $(S(t,\omega)-H(t,\omega))^+$ with asymmetric information in which the firm and regulator observes $H(t,\omega)$ but not the true benchmark $H(t,\omega;\;s(\theta))$ known to the regulator. Following, \citet[pp.~40-41]{JohnsonTian2000a}, we let the index $I(t,\omega)$ and stock price $S(t,\omega)$ follow geometric Brownian motion $(B)$ processes with correlation $\rho$ (after suppressing $(t,\omega)$) given by:
\begin{align}
   \frac{dS}{S} &= \mu_S dt + \sigma_SdB_S\\
   \frac{dI}{I} &= \mu_S dt + \sigma_IdB_I\\
   dB_S&\;dB_I = \rho dt
\end{align}
Here, we assume that dividend payout rate is zero. To characterize benchmark index dynamics, we apply the formulae in \citet[Prop.~2.3(i),~Thm.~4.3]{Jacka1991} for an American put option by changing the signs and probability statement in Jacka's equation (4.3)\footnote{Jacka's $K(x,t)$ is our $C(t,S;\theta)$. \citet[pp.~551,~553]{Kim1990} solves the American call option problem, with early exercise premium, using iterative methods.} accordingly. In that case the regulator's call option is decomposed as:
\begin{equation}
   C(t,S;r) = H(0)-S(0) + rH(0)\int^t_0e^{-ru}\text{Pr}\{S(u,\omega)>H(u,\omega)\}du\label{eq:DecomposeRegOption}
\end{equation}
where $H(u,\omega)$ is a \emph{de facto} time varying boundary which gives rise to the \emph{stoping time}
\begin{equation}
   \tau^H(\omega) = \{t\ge 0; S(t,\omega) = H(t,\omega)\}\label{eq:RegStopTime}
\end{equation}
From \eqref{eq:DecomposeRegOption} and \eqref{eq:RegStopTime} the \emph{regulator's early exercise premium } (REEP) is given by the quantity:
\begin{equation}
   \text{REEP}(\tau^H)=rH(0)\int^{\tau^H}_0e^{-ru}\text{Pr}\{S(t,\omega)\ge H(t,\omega)\}du\label{eq:REEP}
\end{equation}
When $H(t,\omega)$ decreases, the probability $\text{Pr}\{\cdot\}$ increases and the value of the early exercise premium REEP$(\tau^H)$ increases as well.

To establish a nexus between regulator signal and stock price volatility we turn to the index option pricing models in \citet[pg.~40]{JohnsonTian2000a} and \citet[pg.~328]{Jorgensen2002}. There, the benchmark stock price $(H)$ is obtained from lognormality assumptions, see \citet[pg.~533]{Hull2006}, which, in the context of our model gives us:
\begin{align}
   H(t)=E^P[S(t,\omega)] &= S(0)\Bigg(\frac{I(t)}{I(0)}\Bigg)^{\beta^{vw}} e^{\eta t}\label{eq:ObsBenchmark}
\end{align}
\begin{align}
   \eta &= (r+\frac{1}{2}\rho\sigma_S\sigma_I)(1-\beta^{vw})\label{eq:ExcessRet}
\end{align}
Here, $\eta$ is excess returns; $r$ is a riskless interest rate, and $\sigma_S,\;\sigma_I$ are volatilities; $\beta^{vw}=\rho(\frac{\sigma_S}{\sigma_I})$ is a measure of systemic risk with the [value weighted] index in a pseudo-CAPM framework. See e.g., \citet[pp.~41-42]{JohnsonTian2000a} for further details on formulae.
\subsubsection{The regulator's human capital beta}\label{subsubsec:regHumanCapBeta}
The interplay between firm capital structure and employee compensation is fairly established in the literature. For example, \citet[pp.~907-908]{BerkStantonZechner2010} introduced a model in which firms with higher leverage pay higher wages. And \citet{ChemmanurChengZhang2010} provides empirical evidence that the compensation of newly hired CEOs is positively impacted by firm leverage. In this section we employ \citet[pg.~15,~eq(22)]{JagannathanWang1996} human capital beta pricing model to establish a nexus between firm leverage and compensation. Those authors decomposed CAPM beta into a value weighted and labor (human capital) component as follows:
\begin{equation}
   \beta = b_{\footnotesize{vw}}\beta^{\footnotesize{vw}} + b_{\footnotesize{labor}}\beta^{\footnotesize{labor}}
\end{equation}
If we take $b_{\footnotesize{vw}}$ as \emph{numeraire} we can rewrite the equation as
\begin{equation}
  \beta^\prime = \beta^{\footnotesize{vw}} + b^\prime_{\footnotesize{labor}}\beta^{\footnotesize{labor}}\label{eq:NumeraireBeta}
\end{equation}
However, the $\beta$ in \eqref{eq:ObsBenchmark} is based on the [value weighted] benchmark. The absence of human capital beta in that benchmark implies that equation \eqref{eq:ObsBenchmark} is misspecified. Substitution of $\beta^{\footnotesize{vw}}$ from \eqref{eq:NumeraireBeta} in \eqref{eq:ExcessRet} gives us
\begin{equation}
   \widehat{\eta}=\eta +  b^\prime_{\footnotesize{labor}}\beta^{\footnotesize{labor}}\label{eq:ModExcessRet}
\end{equation}
where $\eta=1-\beta^\prime$. Thus, the true benchmark (unobserved by the firm)with human capital signals is given by
\begin{align}
   H(t,\beta^{\footnotesize{labor}}) &= S(0)\Bigg(\frac{I(t)}{I(0)}\Bigg)^{\beta^\prime}\exp({\widehat{\eta}t})\label{eq:UnObsBenchmark}
\end{align}
\begin{align}
   &=S(0)\Bigg(\frac{I(t)}{I(0)}\Bigg)^{\beta}\exp({\eta t})\Bigg(\frac{I(t)}{I(0)}\Bigg)^{b^\prime_{\footnotesize{labor}}\beta^{\footnotesize{labor}}}\exp({b^\prime_{\footnotesize{labor}}\beta^{\footnotesize{labor}}t})\\
   &= H(t)\Bigg(\frac{I(t)}{I(0)}\Bigg)^{b^\prime_{\footnotesize{labor}}\beta^{\footnotesize{labor}}}\exp({b^\prime_{\footnotesize{labor}}\beta^{\footnotesize{labor}}t})\label{eq:MultBenchmark}\\
               \Rightarrow H(t) &= H(t,\beta^{\footnotesize{labor}})\Bigg(\frac{I(t)}{I(0)}\Bigg)^{-b^\prime_{\footnotesize{labor}}\beta^{\footnotesize{labor}}}\exp(-{b^\prime_{\footnotesize{labor}}\beta^{\footnotesize{labor}}t})\label{eq:DisBenchmark}
\end{align}
The \emph{implicit} lognormal relationship in \eqref{eq:MultBenchmark}, see e.g., \citet[pp.~274-275]{Hull2006},  suggests that after cancelling out the common time variable $t$
\begin{equation}
   \sigma^2_H\Big(\beta^{\footnotesize{labor}}\Big)=\sigma^2_H+\Big(b^\prime_{\footnotesize{labor}}\beta^{\footnotesize{labor}}\Big)^2\sigma^2_I \;>\; \sigma^2_H\label{eq:LogNormalSigmaIneq}
\end{equation}
Thus, by increasing human capital signal $\beta^{\footnotesize{labor}}$ the regulator is able to (1) increase true benchmark volatility, and (2) lower the contractual benchmark in \eqref{eq:DisBenchmark}\footnote{This result is invariant to regulator type. For example, following the conditional beta paradigm in \citet[pp.~429-430]{FersonSchadt1996}, rewrite \eqref{eq:NumeraireBeta} as $\beta=(b_0+b_1Z)\beta^{\footnotesize{labor}}+b_{\footnotesize{vw}}\beta^{\footnotesize{vw}}$ where for some threshold $\beta^\star$, $Z=1$ for high beta type $\beta^{\footnotesize{labor}}>\beta^\star$ and $Z=0$ for low beta type $\beta^{\footnotesize{labor}}\leq \beta^\star$. Thus, $\beta^{\footnotesize{labor}}\mapsto b_0\beta^{\footnotesize{labor}}$ for low or $\beta^{\footnotesize{labor}}\mapsto (b_0+b_1)\beta^{\footnotesize{labor}}$ for high types. For each type, $\sigma^2_H\Big(\beta^{\footnotesize{labor}}\Big)\;>\; \sigma^2_H$.}. Whereupon, [s]he is able to increase the early exercise premium (REEP) in \eqref{eq:REEP} and consequently the value of her call option. This foregoing analysis is summarized the following
\begin{lem}[Regulator's induced volatility]\label{lem:RegInducedVolatility}~\\
   Let $\theta=\beta^{\footnotesize{labor}}$; $H(t)$ be the benchmark in \eqref{eq:ObsBenchmark} observed by firm and regulator; $H(t,\theta)$ be the benchmark in \eqref{eq:UnObsBenchmark} known only be the regulator; and $S(t,\omega,\theta)=H(t,\theta) + \epsilon(t,\omega)$ be the firm's stock price, where $\epsilon\sim\;iid(0,\sigma^2_\epsilon)$. Let $\sigma_S$ be stock price volatility without regulator signalling, and $\sigma_S(\theta)$ be stock price with regulator signalling. Then $\sigma_S(\theta)>\sigma_S$. \hfill$\Box$
\end{lem}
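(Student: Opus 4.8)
The proof is essentially an assembly of the ingredients developed in \autoref{subsubsec:regHumanCapBeta}: the orthogonal decomposition of the stock price into a common (index-driven) benchmark component and an idiosyncratic component, together with the benchmark-volatility inequality \eqref{eq:LogNormalSigmaIneq}, which I take as already established. The guiding observation is that the idiosyncratic shock $\epsilon(t,\omega)$ is the \emph{same} random variable in the no-signalling price $S(t,\omega)=H(t)+\epsilon(t,\omega)$ and in the signalling price $S(t,\omega,\theta)=H(t,\theta)+\epsilon(t,\omega)$, and --- being idiosyncratic --- is orthogonal to the index $I$ and hence to the benchmark, which by \eqref{eq:ObsBenchmark}--\eqref{eq:UnObsBenchmark} is a deterministic transform of $I$. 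Consequently, in any second-moment functional that is additive across this orthogonal split, the idiosyncratic contribution is identical in the two regimes and cancels from their difference; the sign of $\sigma_S(\theta)-\sigma_S$ is therefore inherited verbatim from the sign of $\sigma_H(\theta)-\sigma_H$.

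Concretely, I would proceed as follows. First, fix the decomposition $S=H+\epsilon$ (resp. $S=H(\theta)+\epsilon$) dictated by part (2) of the standing Assumption, with $\epsilon\sim iid(0,\sigma^2_\epsilon)$ independent of $I$. Second, using $\mathrm{Cov}(H,\epsilon)=0$ --- because $H$ is $\sigma(I)$-measurable and $\epsilon\perp I$ --- obtain $\sigma^2_S=\sigma^2_H+\sigma^2_\epsilon$ and, identically, $\sigma^2_S(\theta)=\sigma^2_H(\theta)+\sigma^2_\epsilon=\sigma^2_H\big(\beta^{\footnotesize{labor}}\big)+\sigma^2_\epsilon$, since embedding $\beta^{\footnotesize{labor}}$ perturbs only the benchmark term and leaves the idiosyncratic shock untouched. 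Third, subtract and invoke \eqref{eq:LogNormalSigmaIneq}: $\sigma^2_S(\theta)-\sigma^2_S=\sigma^2_H\big(\beta^{\footnotesize{labor}}\big)-\sigma^2_H=\big(b^\prime_{\footnotesize{labor}}\beta^{\footnotesize{labor}}\big)^2\sigma^2_I\ge 0$, strictly positive whenever the regulator's effective human-capital signal $b^\prime_{\footnotesize{labor}}\beta^{\footnotesize{labor}}$ is nonzero. Fourth, since $\sigma^2_S(\theta)>\sigma^2_S>0$, take square roots to conclude $\sigma_S(\theta)>\sigma_S$.

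The main obstacle is bookkeeping consistency rather than analysis: \eqref{eq:LogNormalSigmaIneq} was derived from the \emph{multiplicative} lognormal representation \eqref{eq:MultBenchmark} (after cancelling the common time factor $t$), whereas the lemma states the stock price in \emph{additive} form $S=H(\theta)+\epsilon$. I would reconcile these by running the variance computation on log-prices, writing $\log S=\log H(\theta)+\log\!\big(1+\epsilon/H(\theta)\big)$ and treating the second term as the idiosyncratic log-deviation (of variance $\sigma^2_\epsilon$ to leading order, and independent of $I$); the point to emphasize is that the comparison of $\sigma_S(\theta)$ with $\sigma_S$ is insensitive to the precise accounting of this term precisely because it enters both regimes in the same way, so only \eqref{eq:LogNormalSigmaIneq} does the work. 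Two further caveats should be made explicit: (i) the strict inequality requires a non-degenerate signal, $\beta^{\footnotesize{labor}}>0$ (equivalently $b^\prime_{\footnotesize{labor}}\beta^{\footnotesize{labor}}\neq 0$) --- otherwise the conclusion weakens to $\sigma_S(\theta)\ge\sigma_S$; and (ii) orthogonality of $\epsilon$ and $I$ is the substantive modelling hypothesis being used --- absent it a cross term could in principle offset the gain --- and it is exactly what part (2) of the Assumption supplies. Finally, I would note, via the conditional-beta reparametrization in the footnote to \eqref{eq:LogNormalSigmaIneq}, that the inequality holds uniformly over the regulator's type, tough or lax.
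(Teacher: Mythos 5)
Your proof is correct and follows essentially the same route as the paper's: the paper's own two-line argument likewise writes $\sigma^2_S=\sigma^2_H+\sigma^2_\epsilon$ from the additive decomposition and then substitutes the signal-augmented benchmark variance from \eqref{eq:LogNormalSigmaIneq} to get $\sigma^2_S(\theta)=\sigma^2_H+\bigl(b^\prime_{\footnotesize{labor}}\beta^{\footnotesize{labor}}\bigr)^2\sigma^2_I+\sigma^2_\epsilon>\sigma^2_S$. Your additional caveats (strictness requiring a non-degenerate signal, and the reconciliation of the multiplicative derivation of \eqref{eq:LogNormalSigmaIneq} with the additive form of $S$) are reasonable elaborations the paper leaves implicit, but they do not change the argument.
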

\begin{proof}
   See e.g., \autoref{apx:ProofOfRegInducVolat}.
\end{proof}
\begin{rem}\label{rem:KanePenaltyAdjBeta}
   If we include a Kane-type penalty function $\Delta>0$ in the beta pricing relationship in \eqref{eq:NumeraireBeta}, then in the event of regulatory failure it becomes  $\beta^\prime_\Delta = \beta^{\footnotesize{vw}} + (b^\prime_{\footnotesize{labor}}-\Delta)\beta^{\footnotesize{labor}}$. In which case volatility in \eqref{eq:LogNormalSigmaIneq} is reduced and the value of the option is reduced as well.
\end{rem}
\subsection{Embedding regulator signals in firm equity}\label{subsec:RegSignalInEquity}
In the last section we set $\theta=\beta^{\footnotesize{labor}}$ to illustrate how human capital beta influences the career call option contract. For notational convenience, we use $\theta$ in the sequel. In the context of \eqref{eq:RegulatorCallOptionOnFirmEquity}, we apply \citet[pp.~453-454]{Merton1974} formula for the option value of the equity in a firm with value $V$ and background driving Brownian motion, and debt $D$. Specifically,
\begin{subequations}\label{eqgrp:FirmOptionValue}
  \begin{align}
     f(V,\tau) &= V\Phi(x_1)-D\exp(-r\tau)\Phi(x_2)\label{eqgrp:FirmOptionValueEqu}\\
     x_1 &= \frac{\log\tfrac{V}{D}+(r+\tfrac{\sigma^2}{2})\tau}{\sigma\sqrt{\tau}}\\
     x_2 &= x_1-\sigma\sqrt{\tau}
  \end{align}
\end{subequations}
where $f(V,\tau)$ is option value of firm equity, $r$ is a risk free rate, $\tau=T-t$ for exercise date $T$ for a European call option priced at time $t$; and $\Phi$ is the cumulative normal. From the outset we note that
\begin{align}
   \alpha f(V,\tau) &= \alpha V\Phi(x_1)-\alpha D\exp(-r\tau)\Phi(x_2)\\
   \intertext{The right hand side of that equation is isomorphic to a call option}
   C(\cdot) &=(\alpha V-\alpha D)^+\\
   \intertext{So the regulator's valuation rests on comparing her current compensation package $K$ to the psuedo-strike price $\alpha D$ that depends on the firm's debt. That is, the exercise of her career change option rests on the evaluation}
   (\alpha V-K)^+&\gtreqqless(\alpha V-\alpha D)^+
\end{align}
We assume that the regulator knows $V, K$ and $D$ but she does not know fraction $\alpha$ on the early exercise date. Thus, if she believes that $K<\alpha D$ so that $(\alpha V-K)^+ > (\alpha V-\alpha D)^+$, then her valuation of the firm is greater than its actual value. So she would be inclined to exercise a Margrabe option to ``exchange" the subject cash flows by quitting accordingly, and joining the firm\footnote{This problem was solved by \citet{Tressaud2007} and \citet{BodieRuffinoTreussard2008}, and is not dispositive of our main results.}. What is more, the regulator now has incentive to design a mechanism that increases firm leverage. So she looks more favorably at levered firms because a highly levered firm increases the value of her career change option. This prediction is consistent with the empirical evidence which finds that privately controlled firms use leverage strategically to obtain better regulatory outcomes. See e.g., s.g. \citet{BortolottiCambiniRondiSpiegel2011}. More important, it explains the finding in \citet[pp.~217-218]{DalBo2006} which finds that regulators increase their support for industry interest during their last year in office. Therefore, in the sequel we assume that $K < \alpha D$.

\subsection{Regulator career option pricing with signalling}\label{subsec:RegulatorOptionPriceWithSignals}
Because $\theta$ is unobservable, its presence in firm value, i.e. stock price, implies that the latter may follow a hidden Markov process. \citet{Guo2001} and \citet{Shepp2002} introduced an information based stock price model on the canonical probability measure space $(\Omega,\mathcal{F},\mathrm{\mathrm{F}},P)$ with filtration $\mathrm{F}$,  which, in the context of Proposition \autoref{prop:RegulatorAbstractOption}, is parameterized as
\begin{align}
   \frac{dV(t,\omega)}{V(t,\omega)} &=\mu(s(t,\omega,\theta))dt+\sigma(s(t,\omega,\theta))dW(t,\omega)
\end{align}
where $s(t,\omega,\theta)$ is itself an unobservable [hidden] Markov process. However, to simplify matters we will assume that $s(t,\omega,\theta)=s(\theta)$. That is, the signal is a deterministic control\footnote{See e.g., \citet[pg.~237]{Oksendal2003}}. In that case, the problem is reduced to one of an American style option where the regulator embeds a deterministic signal in the value of the firm, and she can exercise her option to join the hitherto regulated firm at any time. Thus, firm value dynamics is given by
\begin{align}
    \frac{dV(t,\omega)}{V(t,\omega)} &=\mu(s(\theta))dt+\sigma(s(\theta))dW(t,\omega)\label{eq:FirmValuDynamics}
\end{align}
More on point, \citet[pg.~1373]{Shepp2002} and \citet{Guo2001} introduce evidence that information asymmetry is dispositive of volatility dynamics, and the activity of insiders leads to increased volatility. That artifact of their model is consistent with \citet{BoyleJhaKennedyTian2011} who find that high volatility firms should use more stock options to compensate their executives. See e.g., also, \citet[\S5.4,~pg.~104]{HaugTaleb2011} for information dependent volatility in option pricing. In our model, those phenomena are represented by Lemma \autoref{lem:RegInducedVolatility} on page \pageref{lem:RegInducedVolatility}. Thus, in \eqref{eq:FirmValuDynamics} we have
\begin{equation}
   \sigma(s(\theta)) > \sigma \label{eq:FirmSignalVolGtNonSigVol}
\end{equation}
Using the formulae in \eqref{eqgrp:FirmOptionValue}, we find that
\begin{subequations}\label{eqgrp:FirmOptionValueWitSignal}
   \begin{align}
      f(V(s(\theta)),\tau) &= V(s(\theta))\Phi(x_1)-D(s(\theta))\exp(-r\tau)\Phi(x_2))\label{eqgrp:FirmOptionValueEquWitSignal}\\
      x_1(s(\theta)) &= \frac{\log\tfrac{V(s(\theta))}{D(s(\theta))}+(r+\tfrac{\sigma(s(\theta))^2}{2})\tau}{\sigma(s(\theta))\sqrt{\tau}}\\
      x_2(s(\theta)) &= x_1(s(\theta))-\sigma(s(\theta))\sqrt{\tau}
  \end{align}
\end{subequations}

In the sequel we write $x_1(\theta)$ and $x_2(\theta)$ instead of $x_1(s(\theta))$ and $x_2(s(\theta))$. To ascertain the impact of regulator signals on firm value, we compare the value of the option on the firm with embedded regulator signal to the value without signal. To do that we turn to option Greeks by examining the vega $(\vartheta)$ of the option on the firm\footnote{See e.g., \citet[pg.~361]{Hull2006} for derivation of formula for vega.}. That is, for the mean zero normal distribution function $\phi$, let $\vartheta(0)$ and $\vartheta(s(\theta))$ be the vega without and with regulator signals respectively. And define
\begin{align}
   \vartheta(0) =\frac{\partial f(V,\tau)}{\partial\sigma}&=V\phi(x_1)\sqrt{\tau}\\
   \vartheta(s(\theta))=\frac{\partial f(V(s(\theta),\tau)}{\partial\sigma(s(\theta))}&=V(s(\theta))\phi(x_1(\theta))\sqrt{\tau}\\
   \intertext{According to Lemma \autoref{lem:RegInducedVolatility}}
   \vartheta(s(\theta))&>\vartheta(0)\\
   \Rightarrow V(s(\theta))\phi(x_1(\theta))\sqrt{\tau}&>V\phi(x_1)\sqrt{\tau}\Rightarrow\frac{V(s(\theta))}{V}>\frac{\phi(x_1)}{\phi(x_1(\theta))}\\
   \intertext{According to Lemma \autoref{lem:RegulatorCareerCallOption} a more levered firm with embedded regulator signals enhances regulator career option. Thus, we have}
   \frac{\phi(x_1)}{\phi(x_1(\theta))}>1\Rightarrow & \exp\Biggl(-\Biggl(\frac{x_1^2}{2}-\frac{x_1(\theta)^2}{2}\Biggr)\Biggr)>1\\
   \Rightarrow &-x_1(\theta)<-x_1 \;\;\text{and}\;\; x_1(\theta)>x_1
\end{align}
\citet[pp.~405-406]{GroppVesalaVulpes2006} defined $-x_1$ as the ``distance to default'' and found that it was a predictor of bank stability and or bankruptcy. In fact, they report that ``the banks that were downgraded  * * * had a significantly higher mean value of $[x_1]$ than those that did not" in the 6-18 months prior to downgrade. See e.g., also \citet{RavivCiamarra2013}. The analysis below provides a theoretical explanation.
\section[Bankruptcy, regulatory signals, firm vega and leverage]{Bankruptcy with regulatory signals in firm vega and leverage}\label{sec:WarningSignalsVegaAvgLev}
The foregoing arguments show that the vega for the option value of the firm is dispositive of regulator incentive to leverage regulated firms. See e.g.,, \citet{ArmstrongVashishtha2012}. According to Kolmogorov Inequality, see e.g. \citet[Thm.~8.3.1,~pg.~249]{AthreyaLahiri2006}, and \eqref{eq:FirmSignalVolGtNonSigVol}, if $\varphi$ is the proportion of total tail probability in the lower tail, then for some threshold value $\lambda$ and assuming maximal probability thresholds we have the nonparametric result
\begin{align}
    &\text{Pr}\Bigl(\max_\theta\{x_1(\theta)\}<-\lambda\Bigr)=\varphi\frac{\text{Var}(x_1(\theta))}{\lambda^2}\\
    &=\varphi\frac{\widetilde{\sigma}^2_{x_1}(\theta)}{\lambda^2}>\varphi\frac{\text{Var}(x_1)}{\lambda^2}=\varphi\frac{\widetilde{\sigma}^2_{x_1}}{\lambda^2}\label{eq:RegValueAtRisk}
\end{align}
where $\widetilde{\sigma}^2_{x_1}(\theta)$ and $\widetilde{\sigma}^2_{x_1}$ are the variances of $x_1(\theta)$ and $x_1$ respectively. The result in \eqref{eq:RegValueAtRisk} is important because it says that the ``large deviation probability"\footnote{See e.g., \citet{Varadhan2008}} for firm value at risk included in $x_1(\omega_b,\theta)$, i.e. tail risk in bad states $\omega_b\in\Omega$, is greater than the corresponding probability if there was no regulatory signals or ``regulatory capture'' embedded in capital structure\footnote{By the same token, if regulator signals induce less volatility, i.e. $\widetilde{\sigma}^2_{x_1}(\theta) < \widetilde{\sigma}^2_{x_1}$ then the firm is less likely to be bankrupt but stock option based compensation will be reduced. See e.g., \citet{ArmstrongVashishtha2012}.}. To see this rewrite $V(s(\theta))$  in \eqref{eqgrp:FirmOptionValueWitSignal} as $V(\theta)$ so that after algebraic reduction we have
\begin{equation}
   \lim_{\sigma(\theta)\rightarrow\infty}\text{Pr}\Bigl\{\max_\theta V(\theta)<\max_\theta D(\theta)\exp\Bigl(-\lambda\sigma(\theta)\sqrt{\tau}-\Bigl(r+\tfrac{\sigma^2(\theta)}{2}\Bigr)\tau\Bigr)\Bigr\}>\varphi\frac{\widetilde{\sigma}^2_{x_1}}{\lambda^2}
\end{equation}
\begin{align}
   \Rightarrow\text{Pr}\{\max_\theta V(\theta)<0\}&>\varphi\frac{1}{\lambda^2\sigma^2\tau}\text{Var}\Bigl(\ln\frac{V}{D}\Bigr)\label{eq:BankruptcyCondition}
\end{align}
where $\widetilde{\sigma}^2_{x_1}=\tfrac{1}{\sigma^2\tau}\text{Var}(\ln\frac{V}{D})$. \autoref{eq:BankruptcyCondition} is equivalent to a bankruptcy condition. It says that if regulator signals induce too much volatility, i.e. uncertainty, in firm value, then the probability that the value of the firm is negative, i.e. it goes bankrupt, is greater than a given threshold without regulator signals. This result characterizes Lehman Brothers bankruptcy where regulators signalled that they were not going to provide liquidity for a highly levered bank. See e.g., \citet{AdrianShin2010} and \citet{TaylorTzengWiddick2012}. The Kane penalty function in Remark \autoref{rem:KanePenaltyAdjBeta} mitigates this risk.

Thus we have proven the following
\begin{prop}\label{prop:RegCaptureValueAtRisk}[Bankruptcy condition]~\label{prop:BankruptcyCondition}
    Assume that regulators embed regulatory signals $\theta$ in firm capital structure. Then firm value-at-risk $x_1(\omega_b,\theta)$ in bad states of nature $(\omega_b)$ is greater than $x_1(\omega_b)$--its value in non-revolving door regimes--and the probability of bankruptcy is higher than it would be in the absence of regulator signals. \hfill $\Box$
\end{prop}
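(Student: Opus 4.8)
The plan is to assemble the proposition from three ingredients already in hand: (i) \autoref{lem:RegInducedVolatility}, which gives $\sigma(s(\theta)) > \sigma$; (ii) the positivity and monotonicity of the Black--Scholes vega for \citet{Merton1974}'s equity-as-call-option formula \eqref{eqgrp:FirmOptionValue}; and (iii) Kolmogorov's maximal inequality, used to lift a pointwise ordering of the distance-to-default statistic into an ordering of tail (bankruptcy) probabilities. The logical skeleton is: higher induced volatility $\Rightarrow$ higher vega $\Rightarrow$ $x_1(\theta) > x_1$ $\Rightarrow$ larger lower-tail probability for $-x_1$ $\Rightarrow$ higher bankruptcy probability, all conditioned on a bad state $\omega_b$.

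First I would record that $\vartheta = \partial f/\partial\sigma = V\phi(x_1)\sqrt{\tau} > 0$, so vega is strictly increasing in the volatility argument. Writing $\vartheta(0)$ and $\vartheta(s(\theta))$ for the vegas built from \eqref{eqgrp:FirmOptionValue} and \eqref{eqgrp:FirmOptionValueWitSignal}, \autoref{lem:RegInducedVolatility} together with the fact (\autoref{lem:RegulatorCareerCallOption}) that a more levered firm carrying the embedded signal raises the regulator's career option gives $\vartheta(s(\theta)) > \vartheta(0)$, i.e. $V(s(\theta))\phi(x_1(\theta))\sqrt{\tau} > V\phi(x_1)\sqrt{\tau}$. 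Cancelling and rearranging forces $\phi(x_1) > \phi(x_1(\theta))$, hence $\exp\bigl(-(x_1^2 - x_1(\theta)^2)/2\bigr) > 1$, i.e. $x_1(\theta)^2 > x_1^2$; resolving the sign in the solvent regime $V > D$ (where $x_1 > 0$, equivalently $-x_1$ is a positive distance to default in the sense of \citet{GroppVesalaVulpes2006}) yields $x_1(\theta) > x_1$ and $-x_1(\theta) < -x_1$. This is exactly the claimed ordering of firm value-at-risk: in the bad state $\omega_b$ one has $x_1(\omega_b,\theta) > x_1(\omega_b)$.

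Next I would pass from this location shift to a probability statement via the Kolmogorov maximal inequality of \citet[Thm.~8.3.1,~pg.~249]{AthreyaLahiri2006}. Treating the signal-indexed family $\{x_1(\theta)\}$ as the relevant centered process and letting $\varphi$ be the proportion of total tail probability in the lower tail, the inequality gives $\text{Pr}\bigl(\max_\theta x_1(\theta) < -\lambda\bigr) = \varphi\,\text{Var}(x_1(\theta))/\lambda^2$ under the maximal-threshold convention of \eqref{eq:RegValueAtRisk}; since \eqref{eq:FirmSignalVolGtNonSigVol} and \autoref{lem:RegInducedVolatility} give $\widetilde{\sigma}^2_{x_1}(\theta) > \widetilde{\sigma}^2_{x_1}$, this probability strictly exceeds the corresponding signal-free bound $\varphi\,\widetilde{\sigma}^2_{x_1}/\lambda^2$. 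Finally I would rewrite the event $\{x_1(\theta) < -\lambda\}$ in firm-value terms by inverting $x_1(\theta) = [\log(V(\theta)/D(\theta)) + (r + \sigma^2(\theta)/2)\tau]/(\sigma(\theta)\sqrt{\tau})$, obtaining the barrier event $\{V(\theta) < D(\theta)\exp(-\lambda\sigma(\theta)\sqrt{\tau} - (r+\sigma^2(\theta)/2)\tau)\}$; sending $\sigma(\theta)\to\infty$ collapses the right-hand barrier to $0$ and produces \eqref{eq:BankruptcyCondition}, i.e. $\text{Pr}(\max_\theta V(\theta) < 0) > \varphi\,\text{Var}(\log(V/D))/(\lambda^2\sigma^2\tau)$. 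Combining this with the value-at-risk ordering above gives precisely the statement.

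I expect the main obstacle to be the variance comparison $\widetilde{\sigma}^2_{x_1}(\theta) > \widetilde{\sigma}^2_{x_1}$: because $x_1$ carries $\sigma$ in its denominator, a naive scaling argument would push the variance the wrong way, so one must lean on the lognormal decomposition behind \eqref{eq:LogNormalSigmaIneq} (the human-capital-beta term inflates $\text{Var}(\log(V/D))$ by enough to dominate the $1/\sigma^2$ factor) or else read the variances nonparametrically as the induced volatilities of \autoref{lem:RegInducedVolatility}. Secondary care points are the sign resolution that turns $x_1(\theta)^2 > x_1^2$ into $x_1(\theta) > x_1$ (valid on the solvent branch $V>D$), making precise the ``$\max_\theta$'' and ``maximal probability thresholds'' so that Kolmogorov's inequality delivers the stated equality with factor $\varphi$, and treating the $\sigma(\theta)\to\infty$ passage as the formal device that extracts the $\{V<0\}$ bankruptcy event rather than a literal limit.
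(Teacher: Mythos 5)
Your proposal is correct and follows essentially the same route as the paper: the vega comparison built on Lemma \ref{lem:RegInducedVolatility} and Lemma \ref{lem:RegulatorCareerCallOption} to obtain $x_1(\theta) > x_1$, Kolmogorov's inequality to order the lower-tail probabilities via $\widetilde{\sigma}^2_{x_1}(\theta) > \widetilde{\sigma}^2_{x_1}$, and the $\sigma(\theta)\to\infty$ passage to extract the $\{V<0\}$ bankruptcy event in \eqref{eq:BankruptcyCondition}. The care points you flag (sign resolution on the solvent branch, the direction of the variance comparison, and reading the maximal inequality as an equality under the ``maximal probability thresholds'' convention) are precisely the steps the paper itself treats informally.
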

\begin{rem}
   This proposition should not be interpreted to mean that in a world with no regulation, firm value-at-risk will be lower than with regulation. The proposition is based on an assumption of asymmetric information between firm and regulator. Thus, more transparency and even imposition of a Kane penalty mitigates against our bankruptcy result. \hfill $\Box$
\end{rem}

\section{Conclusion}\label{sec:Conclusion}
This paper derives a firm bankruptcy risk result based on a regulator's career concerns in a revolving door setting. We argue that before career-switch, regulators embed their human capital [beta] in a target firm's capital structure. We show how regulators use their human capital beta to affect the early exercise premium in their subsequent indexed executive stock option. Further, we show how that behaviour increases a firm's bankruptcy risk, and estimation of its value at risk (VaR). In the United States, conflict of interest and anti-lobbying laws are in place to mitigate against some of the predictions of our theoretical model. See e.g., 18 U.S.C. \S207, 21 U.S.C. \S1602(3). To the best of out knowledge, the results in this paper are new to the literature on corporate finance and mechanism design. However, the overarching theme is how the vagaries of regulation contributes to financial instability. Of course, we have one caveat.  Like public servants everywhere, regulators approach their careers with varying degrees of commitment to public service.  One set of these will be open to and will entertain alternatives to public service.  Our model is relevant to this subset.  However, we would expect that there are others whose passion is fed by a public trust, and as such are not expected to do other than adhere to their expected charge of bringing maximum consumer surplus benefits from their charge as regulators.
\singlespace
\appendix{
\begin{center}
\sc{Appendix}
\end{center}
}\label{apx:Appendix}
\section{Proofs}\label{apx:Proofs}
\subsection{Proof of incentive compatibility constraint \autoref{defn:RegIncentiveConstraint}}\label{apx:ProofOfICconstraint}
   We assume that the regulator enters an indexed option contract $C(S(t,\omega), H(t,\omega);\;\theta)=(S(t,\omega)-H(t,\omega))^+$ with stock price $S(t,\omega)$, and time varying benchmark $H$ related to her type $\theta$ such that [suppressing $\omega$] $H(t)=H(t,\theta)e^{-\theta t}$. Let $\widehat{\theta}=\theta + \epsilon$ where $\epsilon\sim (0,\sigma^2_\epsilon)$. Assume that $H$ is twice differentiable in $\theta$. Thus,
   \begin{align}
      \widehat{H}(t,\theta) &= H(t,\widehat{\theta})e^{-\widehat{\theta} t}\\
      &\approx \Bigg(H(t,\theta) + H_{\widehat{\theta}}(t,\theta)\epsilon + H_{\widehat{\theta}\widehat{\theta}}(t,\theta)\frac{1}{2}\epsilon^2\Bigg)e^{-\theta t}e^{-\epsilon t}\\
      &\Rightarrow E^P[\widehat{H}(t)]\approx E^P\Bigg[\Big(H(t,\theta)+H_{\widehat{\theta}}(t,\theta)\epsilon + H_{\widehat{\theta}\widehat{\theta}}(t,\theta)\frac{1}{2}\epsilon^2\Big)\Big(1-\epsilon t +\frac{1}{2}\epsilon^2 t^2\Big)e^{-\theta t}\Bigg]\\
      &= H(t) + \Big(\frac{1}{2}H_{\widehat{\theta}\widehat{\theta}}(t,\theta)-H_{\widehat{\theta}}(t,\theta)t+\frac{1}{2}H(t,\theta)t^2\Big)\sigma^2_\epsilon e^{-\theta t}\\
      &\Rightarrow E^P[\widehat{H}(t)] \ge H(t)
   \end{align}
   So by not revealing her true type our regulator will set a higher performance benchmark for herself! And the value of $C$ will be reduced. Therefore, it is in her interest to reveal her true type in order to lower the benchmark for performance. \hfill$\Box$
\subsection{Proof of Lemma \autoref{lem:RegInducedVolatility}}\label{apx:ProofOfRegInducVolat}
   \begin{proof}
      When there is no signal, we have $\sigma^2_S=\sigma^2_H+\sigma^2_\epsilon$. With signals, according to \eqref{eq:LogNormalSigmaIneq}, we have $\sigma^2_S(\theta)=\sigma^2_H+\Big(b^\prime_{\footnotesize{labor}}\beta^{\footnotesize{labor}}\Big)^2\sigma^2_I + \sigma^2_\epsilon > \sigma^2_S=\sigma^2_H+\sigma^2_\epsilon$.
   \end{proof}
\bibliographystyle{chicago}
\addcontentsline{toc}{section}{References} 

\end{document}